\newtheorem{theorem}{Theorem}[section]
\newtheorem{proposition}{Proposition}[section]
\newtheorem{corollary}{Corollary}[section]
\newcommand{\mmse}{{\sf mmse}}
\newcommand{\snrbar}{\overline{\mathsf{snr}}}
\DeclareMathOperator{\K}{K}
\title{Outage Probability of the Gaussian MIMO Free-Space Optical Channel with PPM} 
\author{Nick Letzepis,\thanks{N. Letzepis is with Institute for Telecommunications Research, University of South Australia, SPRI Building - Mawson Lakes Blvd., Mawson Lakes SA 5095, Australia, e-mail: \tt nick.letzepis@unisa.edu.au.}~\IEEEmembership{Member,~IEEE} and Albert Guill\'en i F\`abregas,\thanks{A. Guill\'en i F\`abregas is with the Department of Engineering, University of Cambridge, Cambridge CB2 1PZ, UK, e-mail: {\tt guillen@ieee.org}.}~\IEEEmembership{Member,~IEEE}}
\date{\today}
\begin{document}

\maketitle

\begin{abstract}
The free-space optical channel has the potential to facilitate
inexpensive, wireless communication with fiber-like bandwidth under
short deployment timelines. However, atmospheric effects can
significantly degrade the reliability of a free-space optical link. In
particular, atmospheric turbulence causes random fluctuations in the
irradiance of the received laser beam, commonly referred to as
\textit{scintillation}. The scintillation process is slow compared to
the large data rates typical of optical transmission. As such, we
adopt a quasi-static block fading model and study the outage
probability of the channel under the assumption of orthogonal
pulse-position modulation. We investigate the mitigation of
scintillation through the use of multiple lasers and multiple
apertures, thereby creating a multiple-input multiple output (MIMO)
channel. Non-ideal photodetection is also assumed such that the
combined shot noise and thermal noise are considered as
signal-independent additive Gaussian white noise. Assuming perfect
receiver channel state information (CSI), we compute the
signal-to-noise ratio exponents for the cases when the scintillation
is lognormal, exponential and gamma-gamma distributed, which cover a
wide range of atmospheric turbulence conditions. Furthermore, we
illustrate very large gains, in some cases larger than $15$ dB, when
transmitter CSI is also available by adapting the transmitted
electrical power.
\end{abstract}
 
 
\section{Introduction}
\PARstart{F}{ree}-space optical (FSO) communication offers an
attractive alternative to the radio frequency (RF) channel for the
purpose of transmitting data at very high rates. By utilising a high
carrier frequency in the optical range, digital communication on the
order of gigabits per second is possible. In addition, FSO links are
difficult to intercept, immune to interference or jamming from
external sources, and are not subject to frequency spectrum
regulations. FSO communications have received recent attention in
applications such as satellite communications, fiber-backup,
RF-wireless back-haul and last-mile connectivity~\cite{wille02}.

The main drawback of the FSO channel is the
detrimental effect the atmosphere has on a propagating laser beam.
The atmosphere is composed of gas molecules, water vapor, pollutants,
dust, and other chemical particulates that are trapped by Earth's
gravitational field. Since the wavelength of a typical optical carrier
is comparable to these molecule and particle sizes, the carrier wave
is subject to various propagation effects that are uncommon to RF
systems. One such effect is \textit{scintillation}, caused by
atmospheric turbulence, and refers to random fluctuations in the
irradiance of the received optical laser beam (analogous to fading
in RF systems)~\cite{stro78,gali95,andr05}.

Recent works on the mitigation of scintillation concentrate on the use
of multiple-lasers and multiple-apertures to create a
multiple-input-multiple-output (MIMO)
channel~\cite{LetzepisHollandCowley2008,ChakDeyFran07,ChakPrak07,cvi07,djor06,wilson05aug,chak05,lee04,haas03}. Many
of these works consider scintillation as an ergodic fading process,
and analyse the channel in terms of its ergodic capacity. However,
compared to typical data rates, scintillation is a slow time-varying
process (with a coherence time on the order of milliseconds), and it
is therefore more appropriate to analyse the outage probability of the
channel. To some extent, this has been done in the works
of~\cite{ChakDeyFran07,far07,wilson05aug,lee04,haas03}. In~\cite{ChakDeyFran07,haas03}
the outage probability of the MIMO FSO channel is analysed under the
assumption of ideal photodetection (i.e. a Poisson counting process)
with no bandwidth constraints. Wilson \textit{et
al.}~\cite{wilson05aug} also assume perfect photodetection, but with
the further constraint of pulse-position modulation (PPM).  Lee and
Chan~\cite{lee04}, study the outage probability under the assumption
of on-off keying (OOK) transmission and non-ideal photodetection,
i.e. the combined shot noise and thermal noise process is modeled as
zero mean signal independent additive white Gaussian noise
(AWGN). Farid and Hranilovic~\cite{far07} extend this analysis to
include the effects of pointing errors.

\begin{figure*}[ht]
  \centering 
  \includegraphics[width=0.8\columnwidth]{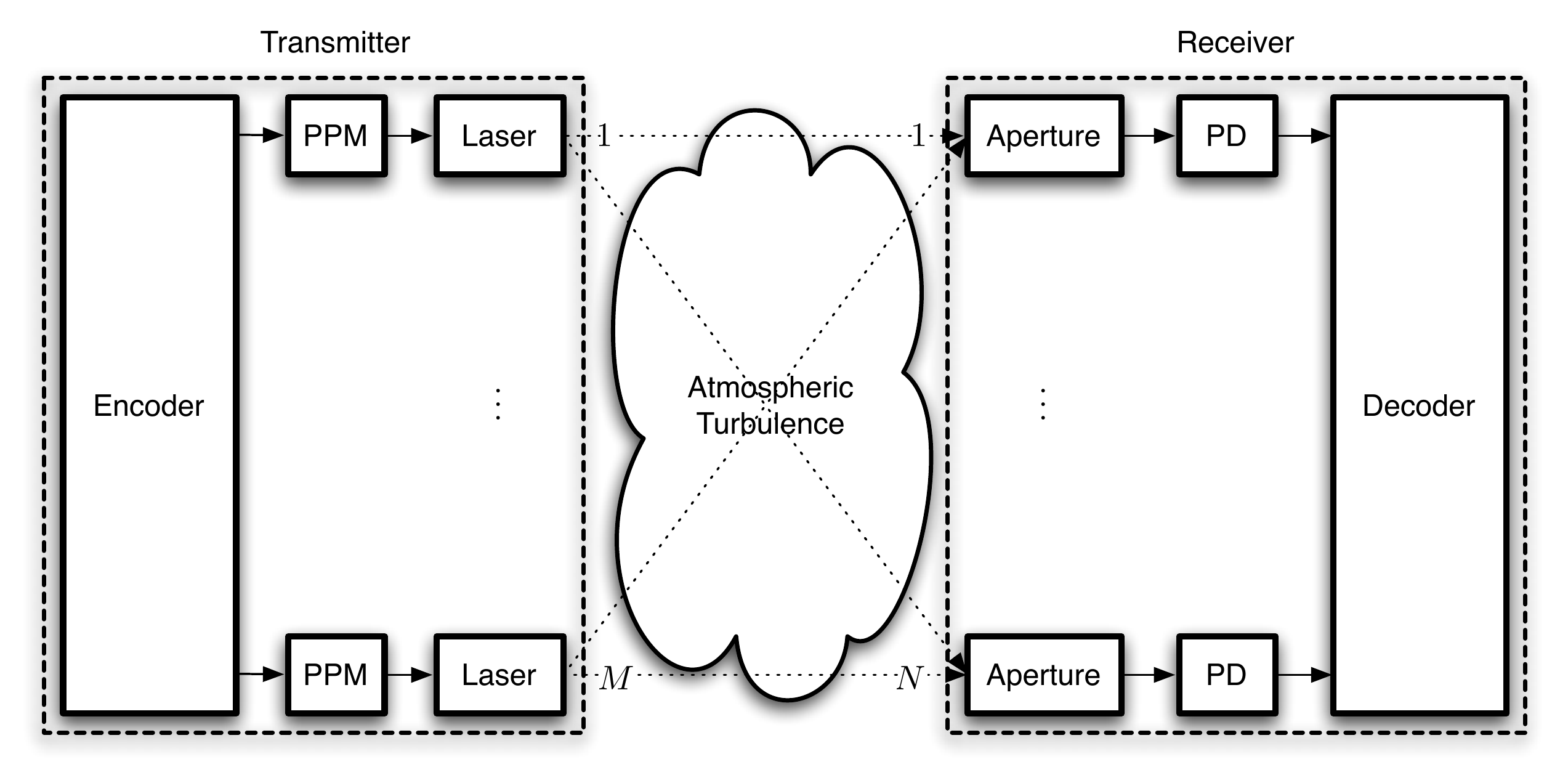}  
  \caption{Block diagram of an $M\times N$ MIMO FSO system.}
  \label{fig:mimo_model}
\end{figure*}

In this paper we study the outage probability of the MIMO FSO channel
under the assumptions of PPM, non-ideal photodetection, and equal gain
combining (EGC) at the receiver. In particular, we model the channel
as a quasi-static block fading channel whereby communication takes
place over a finite number of blocks and each block of transmitted
symbols experiences an independent identically distributed (i.i.d.)
fading realisation
\cite{ozarow_shamai_wyner,biglieri_proakis_shamai}. We consider two
types of CSI knowledge. First we assume perfect CSI is available only
at the receiver (CSIR case), and the transmitter knows only the
channel statistics. Then we consider the case when perfect CSI is also
known at the transmitter (CSIT case).\footnote{Given the slow
time-varying scintillation process, CSI can be estimated at the
receiver and fed back to the transmitter via a dedicated feedback
link.} Under this framework we study a number of scintillation
distributions: lognormal, modelling weak turbulence; exponential,
modelling strong turbulence; and gamma-gamma~\cite{haba01}, which
models a wide range of turbulence conditions.  For the CSIR case, we
derive signal-to-noise ratio (SNR) exponents and show that they are
the product of: a channel related parameter, dependent on the
scintillation distribution; the number of lasers times the number of
apertures, reflecting the spatial diversity; and the Singleton bound
\cite{KnoppHumblet,malkamaki_leib,albert_beppe_it}, reflecting the
block diversity. For the CSIT case, the transmitter finds the optimal
power allocation that minimises the outage probability
\cite{caire_taricco_biglieri_pc}. Using results
from~\cite{NguyenGuillenRasmussen2007}, we derive the optimal power
allocation subject to short- and long-term power constraints. We show
that very large power savings are possible compared to the CSIR
case. Interestingly, under a long-term power constraint, we show that
delay-limited capacity~\cite{HanlyTse1998} is zero for exponential and
(in some cases) gamma-gamma scintillation, unless one codes over
multiple blocks, and/or uses multiple lasers and apertures.

The paper is organised as follows. In Section~\ref{sec:system_model},
we define the channel model and assumptions. In
Section~\ref{sec:scint_models} we review the lognormal, exponential
and gamma-gamma models. Section~\ref{sec:pout_mut_mmse} defines the
outage probability and presents results on the minimum-mean squared
error (MMSE). Then in Sections~\ref{sec:pout_anal_csir} and
\ref{sec:pout_anal_csit} we present the main results of our asymptotic
outage probability analysis for the CSIR and CSIT cases,
respectively. Concluding remarks are then given in
Section~\ref{sec:conc}. Proofs of the various results can be found in
the Appendices.

\section{System Model} \label{sec:system_model}
We consider an $M \times N$ MIMO FSO system with $M$ transmit lasers
an $N$ aperture receiver as shown in
Fig.~\ref{fig:mimo_model}. Information data is first encoded by a
binary code of rate $R_c$. The encoded stream is modulated according
to a $Q$-ary PPM scheme, resulting in rate $R=R_c\log_2Q$
(bits/channel use). Repetition transmission is employed such that the
same PPM signal is transmitted in perfect synchronism by each of the
$M$ lasers through an atmospheric turbulent channel and collected by
$N$ receive apertures. We assume the distance between the individual
lasers and apertures is sufficient so that spatial correlation is
negligible. At each aperture, the received optical signal is converted
to an electrical signal via photodetection. Non-ideal photodetection is assumed such that the
combined shot noise and thermal noise processes can be modeled as zero
mean, signal independent AWGN (an assumption commonly used in the
literature, see
e.g.~\cite{gali95,doli00b,doli00,zhu03,li03,lee04,simon05,andr05,far07,navi07,LetzepisHollandCowley2008}).

In FSO communications, channel variations are typically much slower
than the signaling period. As such, we model the channel as a
non-ergodic block-fading channel, for which a given codeword of length
$BL$ undergoes only a finite number $B$ of scintillation realisations
\cite{ozarow_shamai_wyner,biglieri_proakis_shamai}. The received signal at aperture $n$, $n = 1,\ldots, N$ can be
written as
\begin{equation} \label{eq:mimo_model}
\yvv_b^n [\ell] =  \left( \sum_{m=1}^{M} \tilde{h}_b^{m,n} \right)\sqrt{\tilde{p}_b}  \,\xvv_b[\ell] + \tilde{\zvv}^n_b[\ell],
\end{equation}
for $b = 1,\dotsc,B, \ell = 1,\dotsc,L$, where $\yvv_{b}^n[\ell],
\tilde{\zvv}_{b}^n[\ell]\in \RR^Q$ are the received and noise signals
at block $b$, time instant $\ell$ and aperture $n$, $\xvv_{b}[\ell],
\in \RR^Q$ is the transmitted signal at block $b$ and time instant
$\ell$, and $\tilde{h}_b^{m,n}$ denotes the scintillation fading
coefficient between laser $m$ and aperture $n$. Each transmitted
symbol is drawn from a PPM alphabet, $\xvv_b[\ell] \in \Xc^{\rm ppm}
\eqdef \{\evv_1,\dotsc,\evv_Q\}$, where $\evv_q$ is the canonical
basis vector, i.e., it has all zeros except for a one in position $q$,
the time slot where the pulse is transmitted. The noise samples of
$\tilde{\zvv}^n_{b}[\ell]$ are independent realisations of a random
variable $Z \sim \Nc(0,1)$, and $\tilde{p}_b$ denotes the received
electrical power of block $b$ at each aperture in the absence of
scintillation. The fading coefficients $\tilde{h}_b^{m,n}$ are
independent realisations of a random variable $\tilde{H}$ with
probability density function (pdf) $f_{\tilde{H}}(h)$.

At the receiver, we assume equal gain combining (EGC) is employed,
such that the entire system is equivalent to a single-input
single-output (SISO) channel, i.e.
\begin{align} 
\yvv_b[\ell] = \frac{1}{\sqrt{N}}\sum_{n=1}^{N} \yvv_b^n[\ell]
             = \sqrt{p_b}  h_b \xvv_b[\ell] + \zvv_b[\ell], \label{eq:mimo_egc_model2}
\end{align}
where $\zvv_b[\ell] = \frac{1}{\sqrt{N}} \sum_{n=1}^{N}
\tilde{\zvv}_b^n[\ell] \sim \mathcal{N}(0,1)$, and $h_b$, a
realisation of the random variable $H$, is defined as the normalised
combined fading coefficient, i.e. 
\beq h_b = \frac{c}{MN}
\sum_{m=1}^{M} \sum_{n=1}^{N} \tilde{h}_b^{m,n},
\eeq 
where $c = 1/(\EE[\tilde{H}] \sqrt{1 + \sigma^2_I/(MN)})$ is a constant to
ensure $\EE[H^2] = 1$.\footnote{For optical channels with ideal photodetection, the normalisation
$\EE[H] =1 $ is commonly used to keep optical power constant. We assume
non-ideal photodetection and work entirely in the electrical domain. Hence, we
chose the normalisation $\EE[H^2] = 1$, used commonly in RF fading
channels. However, since we consider only the asymptotic behaviour of the
outage probability, the specific normalisation is irrelevant and does
not affect our results.}%
~Thus, the total instantaneous received electrical power at block $b$
is $p_b = M^2 N \tilde{p}_b/c$, and the total average received SNR is
$\snr \triangleq \EE [h_b p_b] = \EE[p_b]$.

For the CSIR case, we assume the electrical power is distributed
uniformly over the blocks, i.e., $p_b=p=\snr$ for
$b=1,\dotsc,B$. Otherwise, for the CSIT case, we will allocate
electrical power in order to improve performance. In particular, we
will consider the following two electrical power constraints
\begin{align}
\text{Short-term:} ~~~~& \frac{1}{B}\sum_{b=1}^B p_b\leq P \label{eq:st_pc}\\
\text{Long-term:} ~~~~& \EE\left[\frac{1}{B}\sum_{b=1}^B p_b\right]\leq P\label{eq:lt_pc}.
\end{align}

Throughout the paper, we will devote special attention to the case of
$B=1$, i.e., the channel does not vary within a codeword. This
scenario is relevant for FSO, since, due to the large data-rates, one
is able to transmit millions of bits over the same channel
realisation. We will see that most results admit very simple forms,
and some cases, even closed form. This analysis allows for a system characterisation where the expressions highlight the roles of the key design parameters.

\section{Scintillation Distributions}\label{sec:scint_models}
The scintillation pdf, $f_{\tilde{H}}(h)$, is parameterised by the
{\em scintillation index} (SI), 
\beq \label{eq:si}
\sigma_I^2 \triangleq \frac{\mathrm{Var}(\tilde{H})}{(\EE[\tilde{H}])^2},
\eeq 
and can be considered as a measure of the strength of the optical
turbulence under weak turbulence conditions~\cite{andr99,haba01}.

The distribution of the irradiance fluctuations is dependent on the
strength of the optical turbulence. For the weak turbulence regime,
the fluctuations are generally considered to be lognormal
distributed, and for very strong turbulence, exponential
distributed~\cite{lawr70,stro78}. For moderate turbulence, the
distribution of the fluctuations is not well understood, and a number
of distributions have been proposed, such as the lognormal-Rice
distribution~\cite{chur87,hill97,haba01,andr05,Vet07} (also known as
the Beckmann distribution~\cite{beck67}) and
K-distribution~\cite{chur87}. In~\cite{haba01},
Al-Habash~\textit{et~al.} proposed a gamma-gamma distribution as a
general model for all levels of atmospheric turbulence. Moreover,
recent work in~\cite{Vet07} has shown that the gamma-gamma model is in
close agreement with experimental measurements under
moderate-to-strong turbulence conditions. In this paper we focus on
lognormal, exponential, and gamma-gamma distributed
scintillation, which are described as follows.

For lognormal distributed scintillation,
\beq
\label{eq:lognormal_pdf} 
f^{\ln}_{\tilde{H}}(h) = \frac{1}{h \sigma \sqrt{2 \pi}}
\exp \left(-(\log h - \mu)^2/(2\sigma^2) \right),
\eeq 
where $\mu$ and $\sigma$ are related to the SI via $\mu =
-\log(1+\sigma_I^2)$ and $\sigma^2 = \log(1+\sigma_I^2)$.

For exponential distributed scintillation
\beq \label{eq:exp_scint}
f^{\exp}_{\tilde{H}}(h) = \lambda \exp(- \lambda h)
\eeq
which corresponds to the super-saturated turbulence regime, where $\sigma^2_I = 1$.

The gamma-gamma distribution arises from the product of two
independent Gamma distributed random variables and has the
pdf~\cite{haba01},
\begin{equation} \label{eq:gg_dist}
f^{\rm gg}_{\tilde{H}}(h) = \frac{2 (\alpha \beta)^{\frac{\alpha+\beta}{2}}}{\Gamma(\alpha) \Gamma(\beta)} h^{\frac{\alpha +\beta}{2} - 1} \K_{\alpha - \beta} (2 \sqrt{\alpha \beta h}),
\end{equation}
where $\K_{\nu}(x)$ denotes the modified Bessel function of the second
kind~\cite[Ch.~10]{abramowitz_stegun}. The parameters $\alpha$ and
$\beta$ are related with the scintillation index via $\sigma_I^2=
\alpha^{-1} + \beta^{-1} + (\alpha\beta)^{-1}$.

\section{Information Theoretic Preliminaries} \label{sec:pout_mut_mmse}
The channel described by \eqref{eq:mimo_egc_model2} under the
quasi-static assumption is not information stable \cite{verdu_han} and
therefore, the channel capacity in the strict Shannon sense is
zero. It can be shown that the codeword error probability of any
coding scheme is lower bounded by the information outage probability
\cite{ozarow_shamai_wyner,biglieri_proakis_shamai}, \beq
\label{eq:pout_def} P_{\rm out}(\snr,R) = \Pr(I(\pvv,\hvv)<R), \eeq
where $R$ is the transmission rate and $I(\pvv,\hvv)$ is the
instantaneous input-output mutual information for a given power
allocation $\pvv \triangleq (p_1, \ldots, p_B)$, and vector channel
realisation $\hvv \triangleq (h_1, \ldots, h_B)$. The instantaneous
mutual information can be expressed as \cite{cover_thomas}
 \beq \label{eq:instant_mi}
I(\pvv,\hvv) = \frac{1}{B}\sum_{b=1}^B I^{\rm awgn}(p_b h_b^2),
\eeq
where $I^{\rm awgn}(\rho)$ is the input-output mutual information of
an AWGN channel with SNR $\rho$. For PPM \cite{doli00b}
\beq
I^{\rm awgn}(\rho)  = \log_2 Q - \EE \left[\log_2\left( 1 + \sum_{q=2}^Q e^{-\rho + \sqrt{\rho}(Z_q-Z_1) } \right)\right],
\label{eq:mi_ppm}
\eeq
where $Z_q \sim \Nc(0,1)$ for $q = 1,\ldots, Q$.

For the CSIT case we will use the recently discovered relationship
between mutual information and the MMSE
\cite{GuoShamaiVerdu2005}. This relationship states that\footnote{The
$\log(2)$ term arises because we have defined $ I^{\rm awgn}(\rho)$ in
bits/channel usage.}
\beq
\frac{d}{d\rho} I^{\rm awgn}(\rho)  = \frac{\mmse(\rho)}{\log(2)}
\eeq
where $\mmse(\rho)$ is the MMSE in estimating the input from the
output of a Gaussian channel as a function of the SNR $\rho$. For PPM,
we have the following result
\begin{theorem} \label{theorem:mmse}
The MMSE for PPM on the AWGN channel with SNR $\rho$ is
\begin{align}
\mmse (\rho) = 1 - \EE \left[ \frac{ e^{2 \sqrt{\rho} (\sqrt{\rho} + Z_1)} + (Q-1)e^{2 \sqrt{\rho} Z_2}}{ \left( e^{\rho + \sqrt{\rho} Z_1 } + \sum_{k=2}^{Q} e^{\sqrt{\rho} Z_k} \right)^2} \right],  \label{eq:qppm_mmse}
\end{align}
where $Z_i \sim \Nc (0,1)$ for $i = 1,\ldots, Q$.
\end{theorem}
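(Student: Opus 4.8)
The plan is to evaluate $\mmse(\rho)$ directly as the error of the conditional-mean (Bayes) estimator on the equivalent vector Gaussian channel, and then to exploit the permutation symmetry of the PPM alphabet to bring the answer into the stated form. I would start from the single-symbol AWGN model $\yvv = \sqrt{\rho}\,\xvv + \zvv$, where $\xvv$ is uniform on $\Xc^{\rm ppm} = \{\evv_1,\dotsc,\evv_Q\}$ and the entries of $\zvv$ are i.i.d. $\Nc(0,1)$, with $y_q$ denoting the $q$-th entry of $\yvv$. Since each $\evv_q$ is a canonical basis vector, $\|\yvv - \sqrt{\rho}\,\evv_q\|^2 = \|\yvv\|^2 - 2\sqrt{\rho}\,y_q + \rho$, so the Gaussian likelihood obeys $p(\yvv \mid \xvv = \evv_q) \propto e^{\sqrt{\rho}\,y_q}$ once the factors common to all $q$ are cancelled. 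With a uniform prior this gives the softmax posterior $\Pr(\xvv = \evv_q \mid \yvv) = e^{\sqrt{\rho}\,y_q}\big/\sum_{j=1}^Q e^{\sqrt{\rho}\,y_j}$.

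The conditional mean $\hat{\xvv} = \EE[\xvv \mid \yvv]$ then has $q$-th component equal to this posterior. By the orthogonality principle, $\mmse(\rho) = \EE\big[\|\xvv - \hat{\xvv}\|^2\big] = \EE\big[\|\xvv\|^2\big] - \EE\big[\|\hat{\xvv}\|^2\big]$, and since $\|\evv_q\|^2 = 1$ the first term is $1$. It remains to compute $\EE[\|\hat{\xvv}\|^2]$, and substituting the posterior gives $\|\hat{\xvv}\|^2 = \sum_{q=1}^Q e^{2\sqrt{\rho}\,y_q}\big/\big(\sum_{j=1}^Q e^{\sqrt{\rho}\,y_j}\big)^2$.

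To reach the claimed expression I would invoke symmetry twice. First, because $\Xc^{\rm ppm}$ is invariant under coordinate permutations and $\zvv$ is i.i.d., $\EE[\|\hat{\xvv}\|^2]$ does not depend on which symbol was transmitted, so I may condition on $\xvv = \evv_1$; then $y_1 = \sqrt{\rho} + Z_1$ and $y_k = Z_k$ for $k = 2,\dotsc,Q$, turning the denominator into $\big(e^{\rho + \sqrt{\rho}Z_1} + \sum_{k=2}^Q e^{\sqrt{\rho}Z_k}\big)^2$ and the numerator into $e^{2\sqrt{\rho}(\sqrt{\rho}+Z_1)} + \sum_{k=2}^Q e^{2\sqrt{\rho}Z_k}$. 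Second, this denominator is a symmetric function of $Z_2,\dotsc,Z_Q$, so by exchangeability the $(Q-1)$ expectations $\EE\big[e^{2\sqrt{\rho}Z_k}/(\cdots)^2\big]$ coincide and their sum may be replaced by $(Q-1)\,\EE\big[e^{2\sqrt{\rho}Z_2}/(\cdots)^2\big]$. Collecting terms yields $\mmse(\rho) = 1 - \EE[\cdots]$ exactly as in \eqref{eq:qppm_mmse}.

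I do not expect a deep obstacle; the work is essentially careful bookkeeping, and the two delicate points are conceptual rather than computational. The first is confirming that the quantity entering the I--MMSE identity is the Euclidean vector error $\EE[\|\xvv - \hat{\xvv}\|^2]$ for this constellation; I would sanity-check the final formula against the limits $\mmse(0) = 1 - 1/Q$ (no observation, so $\hat{\xvv}$ is the uniform vector with $\|\hat{\xvv}\|^2 = 1/Q$) and $\mmse(\rho) \to 0$ as $\rho \to \infty$, both of which the stated expression satisfies. The second is the exchangeability step, which is valid precisely because the denominator is symmetric in $Z_2,\dotsc,Z_Q$, so the per-index expectations are genuinely equal and not merely equal in distribution before integration.
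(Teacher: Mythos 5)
Your proposal is correct and takes essentially the same route as the paper's own proof: the softmax conditional-mean estimator $\hat\xvv = \EE[\xvv\mid\yvv]$, the orthogonality principle giving $\mmse(\rho) = 1 - \EE\left[\|\hat\xvv\|^2\right]$, and conditioning on $\xvv = \evv_1$ by the symmetry of the PPM constellation. Your explicit justification of the exchangeability step that collapses the $Q-1$ interference terms into $(Q-1)\,\EE\left[e^{2\sqrt{\rho}Z_2}/(\cdot)^2\right]$ is the only point the paper leaves implicit, and it is valid for exactly the reason you give.
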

\begin{proof}
See Appendix~\ref{appendix:proof_mmse}.
\end{proof}

Note that both~\eqref{eq:mi_ppm} and~\eqref{eq:qppm_mmse} can be
evaluated using standard Monte-Carlo methods.

\section{Outage Probability Analysis with CSIR} \label{sec:pout_anal_csir}

For the CSIR case, we employ uniform power allocation, i.e. $p_1 =
\ldots = p_B = \snr$. For codewords transmitted over $B$ blocks,
obtaining a closed form analytic expression for the outage probability
is intractable. Even for $B=1$, in some cases, for example the
lognormal and gamma-gamma distributions, determining the exact
distribution of $H$ can be a difficult task.  Instead, as we shall
see, obtaining the asymptotic behaviour of the outage probability is
substantially simpler. Towards this end, and following the footsteps
of \cite{Tse,albert_beppe_it}, we derive the~\textit{SNR~exponent}.

\begin{theorem} \label{theorem:mimo_snr_exponents}
The outage SNR exponents for a MIMO FSO communications system modeled
by~\eqref{eq:mimo_egc_model2} are given as follows:
\renewcommand{\labelenumi}{\textit{(\alph{enumi})}}
\begin{align}
 d^{\rm ln}_{(\log \snr)^2} &= \frac{ MN }{8\log(1+\sigma_I^2)} \left( 1 + \left\lfloor B\left(1- R_c\right) \right\rfloor \right)\label{eq:mimo_snr_exponent_lognormal}\\
 d^{\rm exp}_{(\log \snr)} &= \frac{MN}{2}  \left( 1 + \left\lfloor B\left(1- R_c\right) \right\rfloor \right),  \label{eq:mimo_snr_exponent_exponential}\\
 d^{\rm gg}_{(\log\snr)} &= \frac{MN}{2} \min(\alpha,\beta) \left(1 + \left\lfloor B\left(1- R_c\right) \right\rfloor \right), \label{eq:mimo_snr_exponent_gamma_gamma}
\end{align}
for lognormal, exponential, and gamma-gamma cases respectively, 
where $R_c = R/\log_2(Q)$ is the rate of the binary code and
\begin{align}
d_{(\log\snr)^k} &\eqdef -\lim_{\snr\to\infty}\frac{\log P_{\rm out}(\snr,R)}{(\log\snr)^k}~~~ k=1,2.
\end{align}
\end{theorem}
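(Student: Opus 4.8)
The plan is to reduce the $B$-block outage event to a counting problem over single-block fade events, and then to compute the per-block probability for each scintillation law from the near-origin behaviour of the combined fading density. First I would exploit the decomposition $I(\pvv,\hvv)=\frac{1}{B}\sum_{b=1}^B I^{\rm awgn}(\snr\, h_b^2)$ together with the elementary properties of the PPM mutual information in \eqref{eq:mi_ppm}: $I^{\rm awgn}$ is monotonically increasing in its argument, it saturates to $\log_2 Q$ as $\rho\to\infty$, and it vanishes as $\rho\to 0$. The relevant transition occurs when the per-block SNR $\rho_b=\snr\, h_b^2$ is $\Theta(1)$, i.e. when $h_b=\Theta(\snr^{-1/2})$; above this scale a block contributes essentially $\log_2 Q$ and below it essentially $0$. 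I would make this rigorous by sandwiching $I^{\rm awgn}$ between step-like upper and lower bounds so that the exponent computed with the idealised staircase coincides with the true one.

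Second, I would recover the block-diversity (Singleton) factor. With the staircase approximation, declaring $k$ blocks faded (contributing $0$) and $B-k$ clear (contributing $\log_2 Q$) yields average mutual information $\frac{B-k}{B}\log_2 Q$, which stays above $R=R_c\log_2 Q$ iff $k\le\lfloor B(1-R_c)\rfloor$. Hence outage requires at least $1+\lfloor B(1-R_c)\rfloor$ blocks to fall below the $\Theta(\snr^{-1/2})$ scale, and — since the blocks are i.i.d. and the per-block fade probability decays — the dominant configuration has exactly that many faded blocks. This is the argument of \cite{albert_beppe_it,Tse}; a matching lower bound on $P_{\rm out}$ follows by forcing precisely $1+\lfloor B(1-R_c)\rfloor$ blocks into deep fade, and an upper bound by a union bound over the faded subsets.

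Third, for a single block I would compute $\Pr(h_b<\snr^{-1/2})$ from the density of the combined coefficient $h_b=\frac{c}{MN}\sum_{m,n}\tilde h^{m,n}$. Because all $MN$ summands are nonnegative, smallness of $h_b$ forces every summand to be small, so the small-ball probability factorises (to leading order) into an $MN$-fold product of single-antenna small-ball probabilities; this is exactly where the spatial-diversity multiplier $MN$ enters. For the exponential and gamma-gamma laws the single-antenna density is regularly varying at the origin, $f_{\tilde H}(h)\sim C h^{\gamma-1}$, with $\gamma=1$ for \eqref{eq:exp_scint} and, using $\K_{\alpha-\beta}(2\sqrt{\alpha\beta h})\sim\frac12\Gamma(|\alpha-\beta|)(\alpha\beta h)^{-|\alpha-\beta|/2}$ in \eqref{eq:gg_dist}, $\gamma=\tfrac{\alpha+\beta-|\alpha-\beta|}{2}=\min(\alpha,\beta)$ for the gamma-gamma case. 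A convolution/beta-integral computation then gives $\Pr(h_b<\epsilon)\sim C'\,\epsilon^{MN\gamma}$, so with $\epsilon=\snr^{-1/2}$ the per-block exponent is $MN\gamma/2$; raising to the power $1+\lfloor B(1-R_c)\rfloor$ yields \eqref{eq:mimo_snr_exponent_exponential} and \eqref{eq:mimo_snr_exponent_gamma_gamma}.

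The main obstacle is the lognormal case, whose density \eqref{eq:lognormal_pdf} decays faster than any power at the origin, so the polynomial small-ball machinery fails and the natural scale becomes $(\log\snr)^2$ rather than $\log\snr$. Here I would use the Gaussian tail estimate $\Pr(\tilde H<\epsilon)=\Phi\!\big(\tfrac{\log\epsilon-\mu}{\sigma}\big)\sim\exp\!\big(-\tfrac{(\log\epsilon)^2}{2\sigma^2}\big)$ as $\epsilon\to 0$, together with a large-deviations argument for the sum of lognormals showing that, to leading exponential order in $(\log\snr)^2$, the event $\{h_b<\snr^{-1/2}\}$ is again governed by all $MN$ coefficients being simultaneously small, so the exponents add. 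With $\log\epsilon=-\tfrac12\log\snr$ this gives a per-block exponent $\frac{MN}{8\sigma^2}$ in the $(\log\snr)^2$ scale; substituting $\sigma^2=\log(1+\sigma_I^2)$ and multiplying by $1+\lfloor B(1-R_c)\rfloor$ produces \eqref{eq:mimo_snr_exponent_lognormal}. The delicate points to control are the factorisation of the sum-of-lognormals small-ball probability (the subexponential prefactors must be shown irrelevant at the $(\log\snr)^2$ scale) and the uniformity of the staircase sandwiching as $\snr\to\infty$.
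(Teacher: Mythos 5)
Your proposal is correct, and it reaches all three exponents by a more elementary route than the paper. The paper's proof (Appendix~\ref{appendix:proof_csir_snr_exponents}) follows the Zheng--Tse large-deviations formalism: it changes variables to the normalised exponents $\zeta_b^{m,n}=-2\log\tilde{h}_b^{m,n}/\log\snr$, observes that $I^{\rm awgn}(\rho_b)\to\log_2 Q\left(1-\openone\{\zetav_b\succ\onevv\}\right)$, and evaluates $P_{\rm out}$ via Varadhan's lemma as the infimum of a rate function over the asymptotic outage set $\Ac$ --- the rate function being $\frac{1}{8\sigma^2}\sum_{b,m,n}(\zeta_b^{m,n})^2$ in the lognormal case and $\frac{1}{2}\sum_{b,m,n}\zeta_b^{m,n}$ (respectively $\frac{\min(\alpha,\beta)}{2}\sum_{b,m,n}\zeta_b^{m,n}$) in the exponential (respectively gamma-gamma) case, with the infimum attained by exactly $\kappa=1+\lfloor B(1-R_c)\rfloor$ blocks at $\zetav_b=\onevv$. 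You replace this machinery with a staircase sandwich on $I^{\rm awgn}$, a union-bound/forcing argument over subsets of faded blocks, and direct small-ball estimates: the regular-variation computation $f_{\tilde H}(h)\sim Ch^{\gamma-1}$ with $\gamma=1$ or $\gamma=\min(\alpha,\beta)$ uses exactly the same Bessel asymptotic as the paper (note that, like the paper, you implicitly need $\alpha\neq\beta$; for $\alpha=\beta$ the $\K_0$ logarithmic singularity only introduces a $\log(1/h)$ factor and leaves the exponent unchanged), and your flagged worry about factorising the lognormal sum is easily settled by the two-sided bound $\prod_i\Pr\left(\tilde H_i<\epsilon/(MN)\right)\le\Pr\left(\sum_i\tilde H_i<\epsilon\right)\le\prod_i\Pr\left(\tilde H_i<\epsilon\right)$ together with $(\log\epsilon-\log MN)^2=(\log\epsilon)^2(1+o(1))$, so the subexponential prefactors are indeed irrelevant at the $(\log\snr)^2$ scale. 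The combinatorial skeleton (dominant configuration: $\kappa$ blocks with all $MN$ coefficients at the $\snr^{-1/2}$ scale) is identical in both proofs; what Varadhan's lemma buys the paper is that partial fades ($h_b\asymp\snr^{-(1-z)/2}$, $0<z<1$) and the boundary case of integer $B(1-R_c)$ are absorbed into one continuous optimisation, whereas your route must check --- as your two-threshold sandwich implicitly does --- that a block at the transition scale costs the same per-block exponent as a full fade, so near-saturated ``clear'' blocks cannot produce a cheaper outage configuration; what your route buys is a self-contained argument with no large-deviations apparatus, and, as a by-product, the leading constants (the Beta-integral prefactor $C'$, the regularised-Gamma form consistent with \eqref{eq:pout_exp_mimo}) that the exponent-only calculation in the paper discards.
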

\begin{proof}
See Appendix~\ref{appendix:proof_csir_snr_exponents}.
\end{proof}

\begin{proposition} \label{theorem:mimo_snr_exponents_achievability}
The outage SNR exponents given in Theorem
\ref{theorem:mimo_snr_exponents}, are achievable by random coding over
PPM constellations whenever $ B\left(1- R_c\right)$ is not an integer.
\end{proposition}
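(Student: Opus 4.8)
The plan is to show that the average error probability of a random PPM code ensemble decays with the same SNR exponent as the information outage probability characterized in Theorem~\ref{theorem:mimo_snr_exponents}. Since the outage probability already lower-bounds the codeword error probability of every code (as noted below~\eqref{eq:pout_def}), exhibiting one ensemble whose average error probability matches the outage exponent suffices: it upper-bounds the best code's exponent and pins it to the outage exponent. Following the method of \cite{malkamaki_leib,albert_beppe_it}, I would generate each of the $2^{BLR}$ codewords by drawing its $BL$ symbols i.i.d. uniformly from $\Xc^{\rm ppm}$ and analyze the ensemble-average word error probability $\bar P_e$ under uniform power $\pvv=(\snr,\dots,\snr)$.

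The first step is the standard threshold decomposition over the fading vector $\hvv$. Writing $\mathcal{O}=\{\hvv : I(\pvv,\hvv)<R\}$ for the outage set and letting $\bar P_e(\hvv)$ denote the ensemble error probability for a fixed realization, the bound $\bar P_e(\hvv)\le 1$ gives
\begin{equation}
\bar P_e \;\le\; \Pr(\hvv \in \mathcal{O}) \;+\; \EE_{\hvv}\!\left[\, \bar P_e(\hvv)\, \mathbf{1}\{\hvv \notin \mathcal{O}\}\,\right].
\end{equation}
The first term is exactly $P_{\rm out}(\snr,R)$, so the task reduces to showing that the second term decays strictly faster in $\snr$. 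For this I would invoke Gallager's random-coding bound for the memoryless channel~\eqref{eq:mimo_egc_model2}: whenever $\hvv\notin\mathcal{O}$ the instantaneous mutual information~\eqref{eq:instant_mi} exceeds $R$, so $\bar P_e(\hvv)\le e^{-BL E_r(\hvv)}$ with a random-coding exponent $E_r(\hvv)>0$ that is increasing in the margin $I(\pvv,\hvv)-R$.

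The crux is then to evaluate $\EE_{\hvv}[\,e^{-BL E_r(\hvv)}\mathbf{1}\{\hvv\notin\mathcal{O}\}\,]$ and show its SNR exponent strictly exceeds that of $P_{\rm out}$. The delicate region is the neighbourhood of the outage boundary, where $E_r(\hvv)\to 0$ and the exponential protection of the code vanishes, so the decay must come entirely from the shrinking fading measure. I would partition $\hvv$ according to how many blocks are in deep fade, reproduce per-block the large-deviations estimates that drive Theorem~\ref{theorem:mimo_snr_exponents} for each scintillation law (the $(\log\snr)^2$ scaling for lognormal, and the polynomial scalings with slopes $\frac{MN}{2}$ and $\frac{MN}{2}\min(\alpha,\beta)$ for exponential and gamma-gamma), and combine them with the Singleton-type counting of dominant fade patterns \cite{KnoppHumblet,albert_beppe_it}. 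Because $I^{\rm awgn}(\rho)$ in~\eqref{eq:mi_ppm} saturates to $\log_2 Q$, the minimal number of deeply-faded blocks that triggers outage is $1+\lfloor B(1-R_c)\rfloor$, and the hypothesis that $B(1-R_c)$ is not an integer forces the remaining ``good'' blocks to carry a strictly positive margin; this yields a strictly positive $E_r(\hvv)$ on the dominant configurations, so the near-boundary contribution is exponentially negligible and the block diversity $1+\lfloor B(1-R_c)\rfloor$ transfers intact to $\bar P_e$.

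The main obstacle I anticipate is precisely this boundary analysis. At integer $B(1-R_c)$ the configuration with exactly $\lfloor B(1-R_c)\rfloor$ blocks in deep fade lands on the constraint $I(\pvv,\hvv)=R$ with vanishing margin in the remaining blocks, so $E_r(\hvv)$ collapses on a set of non-negligible measure and no longer dominates the fading penalty; this is the degenerate case the hypothesis excludes, and where achievability via this ensemble would at best incur a sub-polynomial (logarithmic) loss. A secondary technical point is verifying that Gallager's $E_r(\hvv)$ for PPM inputs inherits the correct high-SNR growth from~\eqref{eq:mi_ppm}, so that the exponent contributed by a good block strictly exceeds that of a marginally-good one; the lognormal case, with its $(\log\snr)^2$ rather than polynomial decay, requires redoing these estimates in the appropriate scale but follows the same template.
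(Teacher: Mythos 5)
Your proposal is correct and follows essentially the same route as the paper: the paper's proof is a one-line appeal to the random-coding achievability argument of \cite[Th.~1]{albert_beppe_it} combined with the large-deviations estimates from the proof of Theorem~\ref{theorem:mimo_snr_exponents}, and your plan -- the outage/non-outage decomposition, Gallager's bound off the outage set, per-block large-deviations analysis for each scintillation law, and the strict mutual-information margin supplied by the non-integrality of $B(1-R_c)$ -- is precisely the argument that citation invokes, adapted to PPM. You also correctly identify why the hypothesis is needed, namely that at integer $B(1-R_c)$ the dominant near-boundary fade configurations carry vanishing margin and the random-coding exponent no longer dominates the fading measure.
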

\begin{proof}
The proof follows from the proof of Theorem \ref{theorem:mimo_snr_exponents} and the proof of \cite[Th. 1]{albert_beppe_it}.
\end{proof}

The above proposition implies that the outage exponents given in
Theorem~\ref{theorem:mimo_snr_exponents} are the optimal SNR exponents
over the channel, i.e. the outage probability is a lower bound to the
error probability of any coding scheme, its corresponding exponents
(given in Theorem \ref{theorem:mimo_snr_exponents}) are an upper bound
to the exponent of coding schemes. From Proposition
\ref{theorem:mimo_snr_exponents_achievability}, we can achieve the
outage exponents with a particular coding scheme (random coding, in
this case), and therefore, the exponents in
Theorem~\ref{theorem:mimo_snr_exponents} are optimal.

From~\eqref{eq:mimo_snr_exponent_lognormal}-\eqref{eq:mimo_snr_exponent_gamma_gamma}
we immediately see the benefits of spatial and block diversity on the
system. In particular, each exponent is proportional to: the number of
lasers times the number of apertures, reflecting the spatial
diversity; a channel related parameter that is dependent on the
scintillation distribution; and the Singleton bound, which is the
optimal rate-diversity tradeoff for Rayleigh-faded block fading
channels~\cite{KnoppHumblet,malkamaki_leib,albert_beppe_it}.

Comparing the channel related parameters
in~\eqref{eq:mimo_snr_exponent_lognormal}-\eqref{eq:mimo_snr_exponent_gamma_gamma}
the effects of the scintillation distribution on the outage
probability are directly visible. For the lognormal case, the channel
related parameter is $8 \log (1 + \sigma^2_I)$ and hence is directly
linked to the SI. Moreover, for small $\sigma^2_I < 1$, $8 \log (1 +
\sigma^2_I) \approx 8 \sigma^2_I$ and the SNR exponent is inversely
proportional to the SI. For the exponential case, the channel related
parameter is a constant $1/2$ as expected, since the SI is
constant. For the gamma-gamma case the channel related parameter is
$\min(\alpha, \beta)/2$, which highlights an interesting connection
between the outage probability and recent results in the theory of
optical scintillation. For gamma-gamma distributed scintillation, the
fading coefficient results from the product of two independent random
variables, i.e. $\tilde{H} = XY$, where $X$ and $Y$ model fluctuations
due to large scale and small scale cells. Large scale cells cause
refractive effects that mainly distort the wave front of the
propagating beam, and tend to steer the beam in a slightly different
direction (i.e. beam wander). Small scale cells cause scattering by
diffraction and therefore distort the amplitude of the wave through
beam spreading and irradiance fluctuations~\cite[p.~160]{andr05}. The
parameters $\alpha,\beta$ are related to the large and small scale
fluctuation variances via $\alpha = \sigma_X^{-2}$ and $\beta =
\sigma_Y^{-2}$. For a plane wave (neglecting inner/outer scale
effects) $\sigma_Y^2 > \sigma^2_X$, and as the strength of the optical
turbulence increases, the small scale fluctuations dominate and
$\sigma^2_Y \rightarrow 1$~\cite[p.~336]{andr05}. This implies that
the SNR exponent is exclusively dependent on the small scale
fluctuations. Moreover, in the strong fluctuation regime, $\sigma^2_Y
\rightarrow 1$, the gamma-gamma distribution reduces to a
K-distribution~\cite[p.~368]{andr05}, and the system has the same SNR
exponent as the exponential case typically used to model very strong
fluctuation regimes.

In comparing the lognormal exponent with the other cases, we observe a
striking difference. For the lognormal
case~\eqref{eq:mimo_snr_exponent_lognormal} implies the outage
probability is dominated by a $(\log(\snr))^2$ term, whereas for
exponential and gamma-gamma scintillation it is dominated by a $\log
(\snr)$ term. Thus the outage probability decays much more rapidly
with SNR for the lognormal case than it does for the exponential or
gamma-gamma cases. Furthermore, for the lognormal case, the slope of
the outage probability curve, when plotted on a $\log$-$\log$ scale,
will not converge to a constant value. In fact, a constant slope curve
will only be observed when plotting the outage probability on a
$\log$-$(\log)^2$ scale.

In deriving~\eqref{eq:mimo_snr_exponent_lognormal} (see
Appendix~\ref{appendix:proof_csir_snr_exponents:lognormal}) we do not
rely on the lognormal approximation\footnote{This refers to
approximating the distribution of the sum of lognormal distributed
random variables as
lognormal~\cite{mitchell68,slimane01,beau04a,beau04b}.}, which has
been used on a number occasions in the analysis of FSO MIMO channels,
e.g.~\cite{lee04,navi07,LetzepisHollandCowley2008}. Under this
approximation, $H$ is lognormal distributed~\eqref{eq:lognormal_pdf}
with parameters $\mu = -\log(1 + \sigma^2_I/(MN)) $ and $\sigma^2 =
-\mu$, and we obtain the approximated exponent
 \beq d_{(\log\snr)^2}
\approx \frac{1}{8\log(1+\frac{\sigma_I^2}{MN})} \left ( 1 +
\left\lfloor B\left(1-R_c\right) \right\rfloor \right).
\label{eq:exponent_mimo_lognormal_approx}
\eeq
Comparing~\eqref{eq:mimo_snr_exponent_lognormal}
and~\eqref{eq:exponent_mimo_lognormal_approx} we see that although the
lognormal approximation also exhibits a $(\log(\snr))^2$ term, it has
a different slope than the true SNR exponent. The difference is due to
the approximated and true pdfs having different behaviours in the
limit as $h \rightarrow 0$. However, for very small $\sigma^2_I < 1$,
using $\log(1 + x) \approx x$ (for $x < 1$)
in~\eqref{eq:mimo_snr_exponent_lognormal}
and~\eqref{eq:exponent_mimo_lognormal_approx} we see that they are
approximately equal.

For the special case of single block transmission, $B=1$, it is
straightforward to express the outage probability in terms of the
cumulative distribution function (cdf) of the scintillation random
variable, i.e. 
\beq
P_{\rm out}(\snr,R) = F_{H} \left(\sqrt{ \frac{\snr_{R}^{\rm
awgn}}{\snr}} \right)
\eeq
where $F_H(h)$ denotes the cdf of $H$, and
$\snr_{R}^{\rm awgn} \eqdef I^{\rm awgn, -1}(R)$ denotes the SNR value
at which the mutual information is equal to $R$. Table
\ref{table:min_snr} reports these values for $Q=2,4,8,16$ and
$R=R_c\log_2 Q$, with $R_c =
\frac{1}{4},\frac{1}{2},\frac{3}{4}$. 
Therefore, for $B =
1$, we can compute the outage probability analytically when the
distribution of $H$ is available, i.e., in the exponential case for
$M,N\geq 1$ or in the lognormal and gamma-gamma cases for $M,N=1$. 
In the case of exponential scintillation we have that
\beq 
P_{\rm{out}} (\snr, R) =   \bar\Gamma \left( MN, \left( MN(1+MN) \frac{\snr^{\rm awgn}_{R}}{\snr} \right)^{\frac{1}{2}} \right) , \label{eq:pout_exp_mimo}
\eeq
where $\bar\Gamma(a,x) \triangleq \frac{1}{\Gamma(a)}\int_0^x t^{a-1}
\exp(-t) \, dt$ denotes the regularised (lower) incomplete gamma
function~\cite[p.260]{abramowitz_stegun}. For the lognormal and
gamma-gamma scintillation with $MN > 1$, we must resort to numerical
methods. This involved applying the fast Fourier transform (FFT) to
$f_{\tilde{H}}$ to numerically compute its characteristic function,
taking it to the $MN$th power, and then applying the inverse FFT to
obtain $f_{H}$. This method yields very accurate numerical
computations of the outage probability in only a few seconds.

\begin{table}[t]
\caption{Minimum signal-to-noise ratio $\snr_{R}^{\rm awgn}$ (in decibels) for reliable communication for target rate $R=R_c \log_2 Q$.} 
\centering
\begin{tabular}{@{}cccc@{}}
\toprule
	$Q$		& $R_c = \frac{1}{4}$	& $R_c = \frac{1}{2}$& $R_c = \frac{3}{4}$\\ \midrule
	$2$        	& $-0.7992$	    		& $3.1821$   		& $6.4109$      \\
	$4$		& $0.2169$			& $4.0598$		& $7.0773$      \\
	$8$		& $1.1579$    			& $4.8382$   		& $7.7222$	\\
	$16$		& $1.9881$    			& $5.5401$   		& $8.3107$      \\
\bottomrule
\end{tabular}
\label{table:min_snr}
\end{table}

Outage probability curves for the $B=1$ case are shown on the left in
Fig.~\ref{fig:outage_plot}.  For the lognormal case, we see that the
curves do not have constant slope for large SNR, while, for the
exponential and gamma-gamma cases, a constant slope is clearly
visible. We also see the benefits of MIMO, particularly in the
exponential and gamma-gamma cases, where the SNR
exponent has increased from $1/2$ and $1$ to $2$ and $4$
respectively.

\begin{figure*}[t]
  \centering\small

   \subfigure{\includegraphics*[width=0.49\columnwidth]{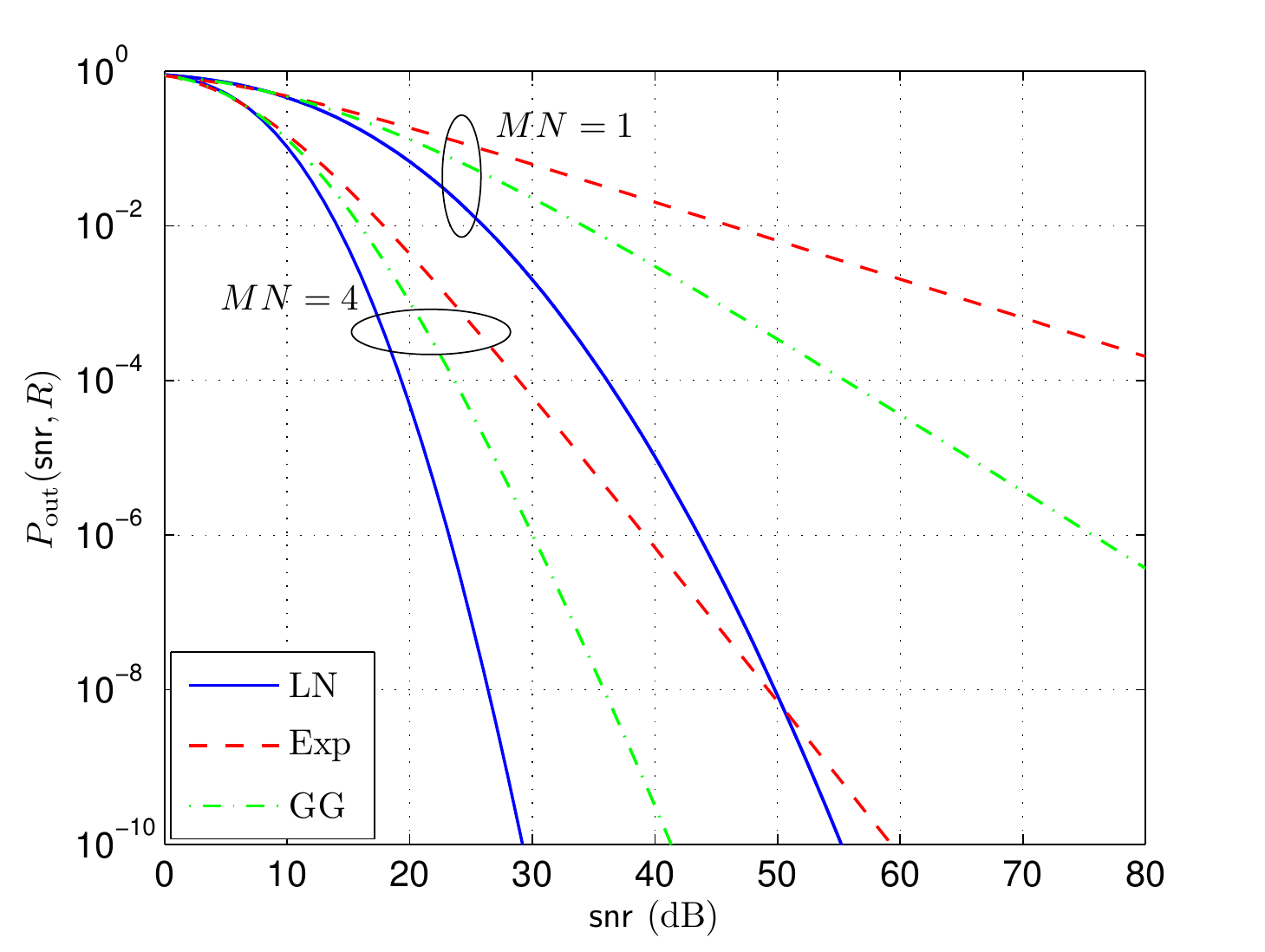}}
   \subfigure{\includegraphics*[width=0.49\columnwidth]{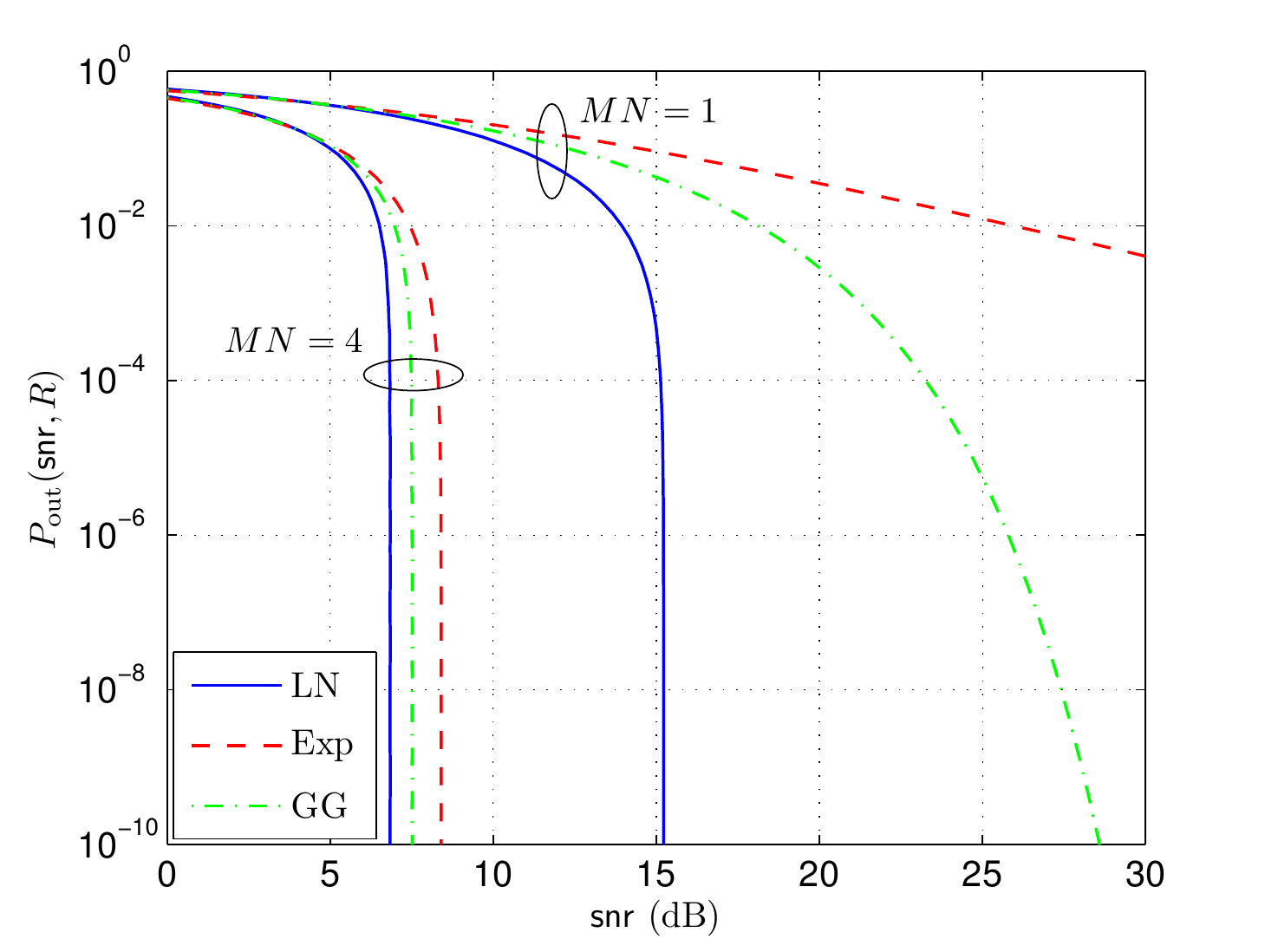}}
   
  \caption{Outage probability curves for the CSIR (left) and CSIT (right) cases with $\sigma^2_I = 1$,
  $B=1$, $Q=2$, $R_c=1/2$, $\snr^{\rm awgn}_{1/2} = 3.18$ dB:
  lognormal (solid); exponential (dashed); and,
  gamma-gamma distributed scintillation (dot-dashed), $\alpha = 2$,
  $\beta = 3$.}
  \label{fig:outage_plot}
\end{figure*}

\section{Outage Probability Analysis with CSIT}  \label{sec:pout_anal_csit}
In this section we consider the case where the transmitter and
receiver both have perfect CSI knowledge. In this case, the
transmitter determines the optimal power allocation that minimises the
outage probability for a fixed rate, subject to a power constraint
\cite{caire_taricco_biglieri_pc}. The results of this section are
based on the application of results
from~\cite{NguyenGuillenRasmussen2007} to PPM and the scintillation
distributions of interest. Using these results we uncover new insight
as to how key design parameters influence the performance of the
system. Moreover, we show that large power savings are possible
compared to the CSIR case.

For the short-term power constraint given by \eqref{eq:st_pc}, the optimal power allocation is given by
mercury-waterfilling at each channel realisation
\cite{LozanoTulinoVerdu2006,NguyenGuillenRasmussen2007},
\begin{equation}
p_b = \frac{1}{h_b^2}\mmse^{-1}\left(\min\left\{ \frac{Q-1}{Q}, \frac{\eta}{h_b^2}\right\}\right), \label{eq:st_power_alloc}
\end{equation}
for $b=1, \ldots, B$ where $\mmse^{-1}(u)$ is the inverse-MMSE
function and $\eta$ is chosen to satisfy the power
constraint.\footnote{Note that
in~\cite{LozanoTulinoVerdu2006,NguyenGuillenRasmussen2007}, the
minimum in~\eqref{eq:st_power_alloc} is between $1$ and
$\frac{\eta}{h_b^2}$. For $Q$PPM, $\mmse(0) = \frac{Q-1}{Q}$
(see~\eqref{eq:qppm_mmse}). Hence we must replace $1$ with $\frac{Q-1}{Q}$.}
From~\cite[Prop. 1]{NguyenGuillenRasmussen2007} it is apparent that
the SNR exponent for the CSIT case under short-term power constraints
is the same as the CSIR case.

For the long-term power constraint given by \eqref{eq:lt_pc} the optimal power
allocation is~\cite{NguyenGuillenRasmussen2007}
\begin{equation}
\label{eq:long_sol}
\pvv= \begin{cases}
\boldsymbol{\wp}, &\sum_{b=1}^B \wp_b \leq s\\
\mathbf{0}, &{\rm otherwise},
\end{cases}
\end{equation}
where
\begin{equation}
\label{eq:minpow_sol}
\wp_b  = \frac{1}{h^2_b}\mmse^{-1}\left(\min\left\{ \frac{Q-1}{Q}, \frac{1}{\eta h^2_b}\right\}\right), \;b=1, \ldots, B
\end{equation}
and $s$ is a threshold such that $s = \infty$ if
$\lim_{s \to \infty}\EE_{\mathcal{R}(s)}\left[\frac{1}{B}\sum_{b=1}^B
\wp_b\right] \leq P$, and
\beq \label{eq:r_set}
\mathcal{R}(s) \triangleq\left\{ \hvv \in \mathbb{R}_+^B:
\frac{1}{B}\sum_{b=1}^B \wp_b \leq s \right\},
\eeq
otherwise, $s$ is chosen such that $P=
\EE_{\mathcal{R}(s)}\left[\frac{1}{B}\sum_{b=1}^B
\wp_b\right]$. In~\eqref{eq:minpow_sol}, $\eta$ is now chosen to
satisfy the rate constraint
\begin{equation}
\frac{1}{B}\sum_{b=1}^B I^{\rm awgn}\left(\mmse^{-1}\left(\min\left\{ \frac{Q-1}{Q}, \frac{1}{\eta h^2_b}\right\}\right)\right)=R
\end{equation}

From~\cite{NguyenGuillenRasmussen2007}, the long-term SNR exponent
is given by
\beq
d_{(\log\snr)}^{\rm lt} 
= \begin{cases}
 \frac{d^{\rm st}_{(\log\snr)} }{ 1- d^{\rm st}_{(\log\snr)} } & d^{\rm st}_{(\log\snr)} < 1  \\
 \infty                               & d^{\rm st}_{(\log\snr)} > 1
\end{cases},
\label{eq:lt_exponent}
\eeq where $d^{\rm st}_{(\log\snr)}$ is the short-term SNR exponent,
i.e., the SNR
exponents~\eqref{eq:mimo_snr_exponent_lognormal}-\eqref{eq:mimo_snr_exponent_gamma_gamma}. Note
that $d_{(\log\snr)}^{\rm lt} = \infty $ implies the outage
probability curve is vertical, i.e. the power allocation
scheme~\eqref{eq:long_sol} is able to maintain constant instantaneous
mutual information~\eqref{eq:instant_mi}. The maximum achievable rate
at which this occurs is defined as the~\textit{delay-limited
capacity}~\cite{HanlyTse1998}. From~\eqref{eq:lt_exponent}
and~\eqref{eq:mimo_snr_exponent_lognormal}-\eqref{eq:mimo_snr_exponent_gamma_gamma},
we therefore have the following corollary.

\begin{corollary}\label{cor:dl_cap_req}
The delay-limited capacity of the channel described
by~\eqref{eq:mimo_egc_model2} with CSIT subject to long-term power
constraint~\eqref{eq:lt_pc} is zero whenever \beq MN  \leq
\begin{cases}
2  \left(1+\left\lfloor B\left(1- R_c\right)\right\rfloor \right)^{-1} & \text{exponential} \\
\frac{2}{\min(\alpha, \beta)}  \left(1+\left\lfloor B\left(1- R_c\right) \right\rfloor \right)^{-1} & \text{gamma-gamma}
\end{cases}.
\label{eq:dl_cap_req}
\eeq
For lognormal scintillation, delay-limited capacity is always nonzero.
\end{corollary}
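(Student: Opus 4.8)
The plan is to reduce the corollary entirely to the long-term exponent formula \eqref{eq:lt_exponent} together with the short-term exponents of Theorem~\ref{theorem:mimo_snr_exponents}. From the definition recalled just above the corollary, the delay-limited capacity is nonzero exactly when the long-term power allocation \eqref{eq:long_sol} produces a vertical outage curve, i.e. when $d^{\rm lt}_{(\log\snr)}=\infty$. By \eqref{eq:lt_exponent} this happens if and only if the short-term exponent obeys $d^{\rm st}_{(\log\snr)}>1$. The delay-limited capacity is therefore zero precisely when $d^{\rm st}_{(\log\snr)}\leq 1$, and the whole statement collapses to solving this inequality in each scintillation model.

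For the exponential and gamma-gamma cases I would substitute \eqref{eq:mimo_snr_exponent_exponential} and \eqref{eq:mimo_snr_exponent_gamma_gamma} directly. Abbreviating the Singleton factor by $S\eqdef 1+\lfloor B(1-R_c)\rfloor$, the condition $d^{\rm exp}_{(\log\snr)}=\tfrac{MN}{2}S\leq 1$ rearranges to $MN\leq 2S^{-1}$, while $d^{\rm gg}_{(\log\snr)}=\tfrac{MN}{2}\min(\alpha,\beta)S\leq 1$ rearranges to $MN\leq \tfrac{2}{\min(\alpha,\beta)}S^{-1}$; these are exactly the two branches of \eqref{eq:dl_cap_req}. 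This step is pure algebra once the exponents are in hand.

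The lognormal case I would handle separately, since its exponent \eqref{eq:mimo_snr_exponent_lognormal} is measured against $(\log\snr)^2$ rather than $\log\snr$. Because $P_{\rm out}$ decays like $\exp(-c(\log\snr)^2)$ with $c>0$, the ordinary linear exponent is $d^{\rm st}_{(\log\snr)}=-\lim_{\snr\to\infty}\log P_{\rm out}/\log\snr=+\infty$, which trivially exceeds $1$; feeding this into \eqref{eq:lt_exponent} gives $d^{\rm lt}_{(\log\snr)}=\infty$ at every rate, so a vertical outage curve is always attainable and the delay-limited capacity is never zero. Since \eqref{eq:lt_exponent} was established for polynomially decaying outage, I would corroborate this with a direct check: $H$ is bounded below by a fixed multiple of a single lognormal summand, so each negative moment of $H$ is controlled by the corresponding finite negative moment of $\tilde{H}$, keeping the expected long-term power in \eqref{eq:minpow_sol} finite and sustaining the vertical curve.

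The step I expect to be the main obstacle is the boundary $d^{\rm st}_{(\log\snr)}=1$, which is exactly the equality in \eqref{eq:dl_cap_req} (for instance $MN\,S=2$ in the exponential case, attained at $MN=2,S=1$ or $MN=1,S=2$). The case split in \eqref{eq:lt_exponent} only determines $d^{\rm lt}_{(\log\snr)}$ for $d^{\rm st}_{(\log\snr)}<1$ and $d^{\rm st}_{(\log\snr)}>1$ and leaves the critical value open, yet the corollary places it in the zero-capacity region (hence $\leq$ rather than $<$). To justify this I would return to the long-term power characterisation of \cite{NguyenGuillenRasmussen2007} and show that at $d^{\rm st}_{(\log\snr)}=1$ the expected power $\EE_{\mathcal{R}(s)}[\tfrac{1}{B}\sum_{b=1}^{B}\wp_b]$ still diverges as the target outage tends to zero, so no vertical curve is sustainable at finite average power. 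This is the only place where the argument needs anything beyond mechanical substitution into \eqref{eq:lt_exponent}.
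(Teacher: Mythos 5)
Your proposal is correct and follows essentially the same route as the paper, which obtains the corollary by direct substitution of the exponents \eqref{eq:mimo_snr_exponent_lognormal}--\eqref{eq:mimo_snr_exponent_gamma_gamma} into the long-term exponent formula \eqref{eq:lt_exponent}, treating the lognormal case as having effectively infinite $\log\snr$-exponent. Your additional scrutiny of the boundary $d^{\rm st}_{(\log\snr)}=1$ is a genuine refinement rather than a departure: the paper's \eqref{eq:lt_exponent} is indeed silent at the critical exponent, and its placement of equality in the zero-capacity region is corroborated only implicitly by the later explicit $B=1$ computations (e.g.\ $C_d^{\exp}=0$ unless $MN>2$, and $C_d^{\rm gg}=0$ unless $\alpha,\beta>2$, both reflecting the logarithmic divergence of $\int_0^\infty f_H(h)h^{-2}\,dh$ at the boundary), which is exactly the divergent-expected-power check you propose.
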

Corollary~\ref{cor:dl_cap_req} outlines fundamental design criteria
for nonzero delay-limited capacity in FSO communications. Single block
transmission ($B=1$) is of particular importance given the slow
time-vary nature of scintillation. From~\eqref{eq:dl_cap_req}, to
obtain nonzero delay-limited capacity with $B=1$, one requires $MN >
2$ and $MN > 2/\min(\alpha,\beta)$ for exponential and gamma-gamma
cases respectively. Note that typically, $\alpha,
\beta \geq 1$. Thus a $3 \times 1$, $1 \times 3$ or $2 \times 2$ MIMO
system is sufficient for most cases of interest.

In addition, for the special case $B=1$, the solution~\eqref{eq:minpow_sol} can be determined explicitly since 
\beq
\eta = \left(h^2 \mmse( I^{\rm awgn, -1}(R)) \right)^{-1} =
\left(h^2 \mmse(\snr_R^{\rm awgn}) \right)^{-1}.
\eeq
 Therefore,
\beq \label{eq:opt_pow_sol_b_1}
\wp^{\rm opt} = \frac{\snr_R^{\rm awgn}}{h^2}.
\eeq
Intuitively,~\eqref{eq:opt_pow_sol_b_1} implies that for single block
transmission, whenever $\snr_R^{\rm awgn}/ h^2 \leq s$, one simply
transmits at the minimum power necessary so that the received
instantaneous SNR is equal to the SNR threshold ($\snr_R^{\rm awgn}$)
of the code. Otherwise, transmission is turned off. Thus an outage occurs
whenever $h < \sqrt{\frac{\snr_R^{\rm awgn}}{s}}$ and
hence  
\beq
P_{\rm out}(\snr,R) = F_H\left(
\sqrt{\frac{\snr^{\rm awgn}_R }{\gamma^{-1} (\snr) }} \right)
\label{eq:csit_pout}
\eeq
where
$\gamma^{-1} (\snr)$ is the solution to the equation $\gamma(s) =
\snr$, i.e.,
\beq
\gamma(s) \triangleq \snr_{R}^{\mathrm{awgn}} \int_{\nu }^{\infty}
\frac{ f_{H}(h)}{h^2} \, dh,
\label{eq:csit_eq}
\eeq
where $\nu \triangleq
\sqrt{\frac{\snr_{R}^{\mathrm{awgn}}}{s}}$. Moreover, the $\snr$ at
which $P_{\rm out}(R,\snr) \rightarrow 0$ is precisely
$\lim_{s\rightarrow \infty} \gamma(s)$. In other words, the minimum long-term
average SNR required to maintain a constant mutual information of $R$ bits per
channel use, denoted by $\snrbar$, is
\beq
\overline{\mathsf{snr}}^{\rm \, awgn}_{R} = \snr^{\rm awgn}_{R} \int_0^{\infty} \frac{f_{H}(h)}{h^2} \, dh. \label{eq:lt_ave_snr_thresh}
\eeq

Hence, recalling that $\snr_R^{\rm awgn}=I^{\rm awgn, -1}(R)$, the
delay-limited capacity (under the constraint of PPM) is\footnote{Note
that a similar expression was derived in~\cite{HanlyTse1998}.}
\beq \label{eq:ppm_dl_cap}
C_{d}(\snr) = I^{\rm awgn} \left( \frac{\snr}{\int_{0}^{\infty} \frac{f_H(h)}{h^2} \,dh  } \right).
\eeq

In the cases where the distribution of $H$ is known in closed form,
\eqref{eq:csit_eq} can be solved explicitly, hence yielding the exact
expressions for outage probability~\eqref{eq:csit_pout} and
delay-limited capacity~\eqref{eq:ppm_dl_cap}. For lognormal
distributed scintillation with $B=M=N=1$, we have that
\beq
\gamma^{\rm{ln}} (s) = \frac{1}{2} \snr_R^{\mathrm{awgn}} (1 + \sigma^2_I)^4 \, \erfc{ \frac{3 \log(1 + \sigma^2_I) + \frac{1}{2} \log \snr^{\rm awgn}_{R} - \frac{1}{2} \log s  }{ \sqrt{2 \log(1 + \sigma^2_I)}}  },
\eeq
and
\begin{equation}
C^{\ln}_{d}(\snr) = I^{\rm awgn} \left( \frac{\snr}{(1 + \sigma^2_I)^{4}} \right),
\end{equation}
where we have explicitly solved the integrals in~\eqref{eq:csit_eq} and~\eqref{eq:ppm_dl_cap} respectively.

For the exponential case with $B=1$, we obtain,
\beq
\gamma^{\mathrm{exp}}(s) = \snr^{\mathrm{awgn}}_{R}\frac{MN(1 + MN) }{(MN-1)(MN-2)} \bar\Gamma\left(MN-2, \sqrt{MN(1+MN) \frac{\snr^{\mathrm{awgn}}_{R}}{s}} \right),
\eeq
and
\begin{equation}
 C_d^{\exp}(\snr) = 
\begin{cases}
 I^{\rm awgn} \left( \frac{(MN-1)(MN-2)}{MN(1+MN)} \snr \right) & MN > 2 \\
0 & \text{otherwise.}
\end{cases}
\end{equation}
For the gamma-gamma case with $B=M=N=1$, $\gamma^{\mathrm{gg}}(s)$ can
be expressed in terms of hypergeometric functions, which are
omitted for space reasons. The delay-limited capacity, however,
reduces to a simpler expression\footnote{Note that since we assume the normalisation $\EE[H^2] = 1$, then $\int_{0}^{\infty} \frac{f_{H}(h)}{h^2} \, dh = \frac{1}{c^2} \int_{0}^{\infty} \frac{f^{\rm gg}_{\tilde{H}}(u)}{u^2} \, du $, where $c = 1/\sqrt{1 + \sigma^2_I}$ and $f^{\rm gg}_{\tilde{H}}(h)$ is defined as in~\eqref{eq:gg_dist} such that $\EE[\tilde{H}] = 1$.}
\beq
C_d^{\rm gg}(\snr) = 
\begin{cases}
 I^{\rm awgn} \left( \frac{(\alpha-2)(\alpha-1)(\beta-2)(\beta-1)}{(\alpha \beta) (\alpha + 1)(\beta + 1)} \snr \right) & \alpha, \beta > 2 \\
0 & \text{otherwise.}
\end{cases}
\eeq

Fig.~\ref{fig:outage_plot} (right) compares the outage probability for
the $B=1$ CSIT case (with long-term power constraints) for each of the
scintillation distributions. For $MN = 1$ we see that the outage curve
is vertical only for the lognormal case, since $C_d = 0$ for the
exponential and gamma-gamma cases. In these cases one must code over
multiple blocks for $C_d > 0$, i.e. from
Corollary~\ref{cor:dl_cap_req}, $B \geq 6$ and $B \geq 4$ for the
exponential and gamma-gamma cases respectively (with $R_c =
1/2$). Comparing the CSIR and CSIT cases in Fig.~\ref{fig:outage_plot}
we can see that very large power savings are possible when CSI is
known at the transmitter. These savings are further illustrated in
Table~\ref{table:csir_csit_comp}, which compares the SNR required to
achieve $P_{\rm{out}} < 10^{-5}$ (denoted by $\snr^*$) for the CSIR
case, and the long-term average SNR required for $P_{\rm out}
\rightarrow 0$ in the CSIT case (denoted by $\snrbar$, which is given
by~\eqref{eq:lt_ave_snr_thresh}). Note that in the CSIT case, the
values of $\snrbar$ given in the parentheses' is the minimum SNR
required to achieve $P_{\rm out} < 10^{-5}$, since $C_d = 0$ for these
cases (i.e. $\snrbar = \infty$). From Table~\ref{table:csir_csit_comp}
we see that the power saving is at least around 15 dB, and in some
cases as high as 50 dB. We also see the combined benefits of MIMO and
power control, e.g. at $MN = 4$, the system is only 3.7 dB (lognormal)
to 5.2 dB (exponential) from the capacity of nonfading PPM channel
($\snr^{\rm awgn}_{1/2} = 3.18$ dB).

\begin{table}[t]
\caption{Comparison of CSIR and CSIT cases with $B=1$, $R=1/2$, $Q = 2$ $\sigma_I^2=1$, $\alpha = 2$, $\beta = 3$. Both $\snr^*$ and $\snrbar$ are measured in decibels.}
\centering
\begin{tabular}{@{}ccccccc@{}}
\toprule
      & \multicolumn{2}{c}{ lognormal} &  \multicolumn{2}{c}{ exponential} &  \multicolumn{2}{c}{ gamma-gamma} \\ 
 $MN$ & $\snr^*$ & $\snrbar$ & $\snr^*$ & $\snrbar$ & $\snr^*$ & $\snrbar$ \\ \midrule
1     & 40.1    & 15.2  & 106.2 & (56.2)   & 65.6  & (24.5)    \\
2     & 29.2    & 9.9   & 57.9  & (17.8)   & 40.7  & 12.2        \\
3     & 24.4    & 7.9   & 42.0  & 11.0     & 31.7  & 9.0         \\
4     & 21.5    & 6.9   & 34.1  & 8.4      & 26.9  & 7.5         \\
\bottomrule
\end{tabular}
\label{table:csir_csit_comp}
\end{table}

\section{Conclusion} \label{sec:conc}
In this paper we have analysed the outage probability of the MIMO
Gaussian FSO channel under the assumption of PPM and non-ideal
photodetection, for lognormal, exponential and gamma-gamma distributed
scintillation. When CSI is known only at the receiver, we have shown
that the SNR exponent is proportional to the number lasers and
apertures, times a channel related parameter (dependent on the
scintillation distribution), times the Singleton bound, even in the
cases where a closed form expression of the equivalent SISO channel
distribution is not available in closed-form. When the scintillation
is lognormal distributed, we have shown that the outage probability is
dominated by a $(\log(\snr))^2$ term, whereas for the exponential and
gamma-gamma cases it is dominated by a $\log
(\snr)$ term.  When CSI is also known at the transmitter, we applied
the power control techniques of \cite{NguyenGuillenRasmussen2007} to
PPM to show that very significant power savings are possible.

\appendices



\section{Proof of Theorem \ref{theorem:mmse}}
\label{appendix:proof_mmse}
Suppose PPM symbols are transmitted over an AWGN channel, the
non-fading equivalent of~\eqref{eq:mimo_egc_model2}. The received noisy
symbols are given by $\yvv = \sqrt{\rho} \xvv + \zvv$, where $\xvv \in
\Xc^{\rm ppm}$ (we have dropped the time index $\ell$ for brevity of
notation).

Using Bayes' rule~\cite{pap91}, the MMSE estimate is
\beq
\hat\xvv = \EE \left[ \xvv | \yvv \right] = \sum_{q = 1}^{Q} \frac{\evv_q \exp( \sqrt{\rho} y_q)   }{ \sum_{k = 1}^{Q} \exp( \sqrt{\rho} y_k) }. \label{eq:mmse_est7}
\eeq
From~\eqref{eq:mmse_est7} the $i$th element of $\hat\xvv$ is
\begin{equation}
\hat{x}_i = \frac{\exp( \sqrt{\rho} y_i)   }{ \sum_{k = 1}^{Q} \exp( \sqrt{\rho} y_k)  }. \label{eq:x_hat_elem}
\end{equation}
Using the orthogonality principle~\cite{kay93} %
$
\mmse(\rho) = \EE \left[ \|\xvv - \hat\xvv \|^2 \right]
            = \EE[\|\xvv \|^2] - \E[ \| \hat\xvv \|^2]
$. %
Since $\| \evv_q \|^2 = 1 $ for all $q = 1,\ldots, Q$, then
$\EE[\|\xvv \|^2] = 1$. Due to the symmetry of $Q$PPM we need only consider the case when $\xvv= \evv_1$ was transmitted. Hence, %
\beq \label{eq:mmse_work}
\mmse(\rho) = 1 - \left(\EE[\hat{x}_1^2] + (Q-1) \EE[\hat{x}_2^2 ] \right).
\eeq %
Now $y_1 = \sqrt{\rho} + z_1$ and $y_i = z_i$ for $i = 2,\ldots, Q$,
where $z_q$ is a realisation of a random variable $Z_q \sim \Nc(0,1)$
for $q = 1, \ldots, Q$. Hence, substituting these values
in~\eqref{eq:x_hat_elem} and taking the
expectation~\eqref{eq:mmse_work} yields the result given the theorem.

\section{Proof of Theorem~\ref{theorem:mimo_snr_exponents}}
\label{appendix:proof_csir_snr_exponents}
We begin by defining a normalised (with respect to SNR) fading
coefficient, $ \zeta_b^{m,n}=-\frac{2\log\tilde{h}_{b}^{m,n}
}{\log\snr}$, which has a pdf
\beq \label{eq:general_alpha_pdf}
f_{\zeta_b^{m,n}}(\zeta) = \frac{\log\snr}{2} e^{-\frac{1}{2}\zeta \log \snr } \, f_{\tilde{H}} \left( e^{-\frac{1}{2}\zeta \log \snr }\right).
\eeq
Since we are only concerned with the asymptotic outage behaviour, the
scaling of the coefficients is irrelevant, and to simplify our analysis we
assume $\EE[\hat{H}^2] = 1$. Hence the instantaneous SNR for block $b$
is given by
\beq
\rho_b = \snr h_b^2 = \left(\frac{1}{MN}\sum_{m=1}^M\sum_{n=1}^N \snr^{\frac{1}{2}\left(1-\zeta_{b}^{m,n}\right)}\right)^2
\eeq
for $b = 1,\ldots, B$. Therefore,
\begin{align}
\lim_{\snr\to\infty} I^{\rm awgn}(\rho_b) &=
\begin{cases}
0 &\text{if all $\zeta^{m,n}_b>1$}\\
\log_2 Q &\text{at least one $\zeta^{m,n}_b<1$}
\end{cases} \notag \\
&=\log_2 Q\left( 1- \openone\{\zetav_b\succ\onevv\}\right) \notag
\end{align}
where $\zetav_b\eqdef(\zeta^{1,1}_b,\dotsc,\zeta^{M,N}_b)$, $\openone \{ \cdot \}$ denotes the indicator function, $\onevv \eqdef (1,\ldots,1)$ is a $1 \times MN$ vector of 1's, and the notation $\avv \succ \bvv$ for vectors $\avv,\bvv\in\RR^k$ means that $a_i>b_i$ for $i=1,\dotsc,k$.

From the definition of outage probability~\eqref{eq:pout_def}, we have
\begin{align}
P_{\rm out}(\snr,R) = \Pr(I_\hvv(\snr)<R) =\int_\Ac f(\zetav)d\zetav
\end{align}
where $\zetav \eqdef ( \zetav_1, \ldots, \zetav_B)$ is a $1 \times BMN$ vector of normalised fading coefficients, $f(\zetav)$ denotes their joint pdf, and
\begin{align}
\Ac = \left\{\zetav\in\RR^{BMN} :  \sum_{b=1}^B\openone\{\zetav_b\succ\onevv\}>B\left(1 -R_c\right)\right\}
\end{align}
is the asymptotic outage set. We now compute the asymptotic behaviour of the
outage probability, i.e.
\beq \label{eq:lim_neg_log_pout}
-\lim_{\snr\to\infty} \log P_{\rm out}(\snr,R) = -\lim_{\snr\to\infty} \log \int_\Ac f(\zetav)d\zetav.
\eeq
\subsection{Lognormal case} \label{appendix:proof_csir_snr_exponents:lognormal}
From~\eqref{eq:lognormal_pdf} and~\eqref{eq:general_alpha_pdf} we obtain the joint pdf,
\begin{align}
f(\zetav) \doteq  \exp \left(-\frac{(\log\snr)^2}{8\sigma^2}\sum_{b=1}^B \sum_{m=1}^M\sum_{n=1}^N (\zeta^{m,n}_b)^2\right), \label{eq:lognorm_joint_zeta}
\end{align}
where we have ignored terms of order less than $(\log \snr )^2$ in the
exponent and constant terms independent of $\zetav$ in front of the
exponential. Combining~\eqref{eq:lim_neg_log_pout},~\eqref{eq:lognorm_joint_zeta},
and using~Varadhan's~lemma~\cite{dembo_zeitouni},
\beq
-\lim_{\snr\to\infty} \log P_{\rm out}(\snr,R) = \frac{(\log\snr)^2}{8\sigma^2}\inf_\Ac \left\{\sum_{b=1}^B \sum_{m=1}^M \sum_{n=1}^N (\zeta^{m,n}_b)^2\right\} \notag
\eeq
The above infimum occurs when any $\kappa$ of the $\zetav_b$ vectors are
such that $\zetav_b \succ \onevv$ and the other $B - \kappa$ vectors
are zero, where $\kappa$ is a unique integer satisfying
\beq \label{eq:kappa_def}
\kappa < B\left(1 -R_c\right) \leq \kappa+1.
\eeq
Hence, it follows that $\kappa = 1  + \left\lfloor B\left(1-R_c\right) \right\rfloor$ and thus,
\beq
-\lim_{\snr\to\infty} \log P_{\rm out}(\snr,R) =  \frac{(\log\snr)^2}{8\sigma^2} MN  \left (  1  + \left\lfloor B\left(1-R_c\right) \right\rfloor \right). \label{eq:lognorm_logpout}
\eeq
Dividing both sides of~\eqref{eq:lognorm_logpout} by $(\log \snr)^2$
the SNR exponent~\eqref{eq:mimo_snr_exponent_lognormal} is obtained.

\subsection{Exponential case}
\label{appendix:proof_csir_snr_exponents:exponential}
From~\eqref{eq:exp_scint} and~\eqref{eq:general_alpha_pdf} we obtain the joint pdf,
\begin{align}
f ( \zetav ) \doteq \exp \left( - \log \snr \frac{MN}{2} \sum_{b=1}^{B} \sum_{m=1}^{M} \sum_{n=1}^{N} \zeta_b^{m,n} \right),
\end{align}
where we have ignored exponential terms in the exponent and constant
terms independent of $\zetav$ in front of the exponential.

Following the same steps as the lognormal case i.e. the defining the
same asymptotic outage set and application of Varadhan's
lemma~\cite{dembo_zeitouni}, the SNR exponent~\eqref{eq:mimo_snr_exponent_exponential} is obtained.

\subsection{Gamma-gamma case}
\label{appendix:proof_csir_snr_exponents:gamma_gamma}
Let us first assume $\alpha > \beta$. From~\eqref{eq:gg_dist} and~\eqref{eq:general_alpha_pdf} we obtain the joint pdf,
\begin{align}
f_{\zeta_b^{m,n}} (\zeta) \doteq \exp\left( - \frac{\beta}{2} \zeta \log \snr \right), ~~~ \zeta>0\label{eq:f_alpha_gamma_gamma_temp}
 \end{align}
for large $\snr$, where we have used the approximation $\K_{\nu}(x)
\approx \frac{1}{2}\Gamma(\nu) (\frac{1}{2}x)^{-\nu} $ for small $x$
and $\nu > 0$ \cite[p.~375]{abramowitz_stegun}. The extra condition,
$\zeta >0 $, is required to ensure the argument of the Bessel function
approaches zero as $\snr \rightarrow \infty$ to satisfy the
requirements of the aforementioned approximation. For the case $\beta
> \alpha$ we need only swap $\alpha$ and $\beta$
in~\eqref{eq:f_alpha_gamma_gamma_temp}. Hence we have the joint pdf
\beq
f(\zetav) \doteq  \exp\left( - \frac{\min(\alpha,\beta) \log \snr}{2} \sum_{b=1}^{B}  \sum_{m=1}^{M} \sum_{n=1}^{N}\zeta_b^{m,n} \right),~~~\zetav\succ\zerovv. \label{eq:f_alphav_gamma_gamma}
\eeq
Now, following the same steps as in the lognormal and exponential
cases, with the additional constraint $\zetav_b\succ\zerovv$, the SNR
exponent~\eqref{eq:mimo_snr_exponent_gamma_gamma} is obtained



\end{document}